
\documentclass[10pt,journal,compsoc]{IEEEtran}
%


%

%
\ifCLASSOPTIONcompsoc
  \usepackage[nocompress]{cite}
\else
  \usepackage{cite}
\fi
%

%
\ifCLASSINFOpdf
\else
\fi
%
%

%
\usepackage{amsmath,amssymb}
\usepackage{changepage}
\usepackage[flushleft]{threeparttable}
\usepackage{array,booktabs,makecell}
\usepackage{setspace}
\DeclareMathOperator*{\argmax}{argmax} 
\usepackage{graphicx}
\usepackage{physics}
\newcolumntype{C}[1]{>{\centering\arraybackslash}p{#1}}
\newcolumntype{L}[1]{>{\raggedright\arraybackslash}p{#1}}

\newcommand{\arm}{{U^{(t)}_{\mathrm{ARM}}}_{i}}
\hyphenation{op-tical net-works semi-conduc-tor}

\usepackage{graphicx}
\usepackage{xcolor}

\usepackage{array}
\usepackage{multirow}
\usepackage{amsfonts}
\usepackage{enumerate}
\usepackage{amsthm}
\usepackage{svg}
\begin{document}
%
\title{Controlling Segregation in Social Network Dynamics as an Edge Formation Game}
%
%
%
%

\author{Rui~Luo,
        Buddhika~Nettasinghe,
        and~Vikram~Krishnamurthy,~\IEEEmembership{Fellow,~IEEE}
\IEEEcompsocitemizethanks{\IEEEcompsocthanksitem R. Luo is with the Sibley School of Mechanical and Aerospace Engineering, Cornell University, Ithaca, NY, 14850.\protect\\
E-mail: rl828@cornell.edu
\IEEEcompsocthanksitem B. Nettasinghe and V. Krishnamurthy are with the School of Electrical and Computer Engineering, Cornell University, Ithaca, NY, 14850.\protect\\
E-mail: \{dwn26, vikramk\}@cornell.edu
\IEEEcompsocthanksitem This research was supported in part by the U. S. Army Research Office under grants W911NF-21-1-0093 and W911NF-19-1-0365.\protect\\}}

\IEEEtitleabstractindextext{
\begin{abstract}
This paper studies controlling \emph{segregation} in social networks via exogenous incentives. We construct an edge formation game on a directed graph. A user (node) chooses the probability with which it forms an inter- or intra- community edge based on a utility function that reflects the tradeoff between homophily~(preference to connect with individuals that belong to the same group) and the preference to obtain an exogenous incentive. Decisions made by the users to connect with each other determine the evolution of the social network. We explore an \emph{algorithmic recommendation mechanism} where the exogenous incentive in the utility function is based on \emph{weak ties} which incentivizes users to connect across communities and mitigates the segregation. This setting leads to a submodular game with a unique Nash equilibrium. In numerical simulations, we explore how the proposed model can be useful in controlling segregation and echo chambers in social networks under various settings.
\end{abstract}

\begin{IEEEkeywords}
Segregation, network formation game, directed stochastic block model, weak ties, mechanism design.
\end{IEEEkeywords}}

\maketitle

\IEEEdisplaynontitleabstractindextext

%
\IEEEpeerreviewmaketitle

\IEEEraisesectionheading{\section{Introduction}\label{sec:introduction}}

%
%
%
%


\IEEEPARstart{S}ocial networks provide a platform for people to exchange information and form online communities. Subject to the homophily effect \cite{mcpherson2001birds} and the bias in the information towards like-minded peers \cite{musco2018minimizing}, social network users tend to connect with others with similar opinion or partisanship, often leading to segregation. Impacts of segregation in social networks include limiting the exposure to diverse perspectives\cite{cinelli2021echo}, amplifying economic inequality\cite{toth2021inequality}, and reinforcing user's purchase interests in e-commerce\cite{ge2020understanding}.

A formal definition of segregation is as follows. Consider a social network represented by a directed graph $G=(V,E)$ without loops or multiple edges, where $V=R\cup B$ is a partition of the set of users into red (R) community and blue (B) community. Let $E_d$ be the set of edges connecting nodes in different communities. We define the segregation measure as:
\begin{equation} \label{eq:segregation index}
    s = 1 - \frac{|E_d|}{2|R| |B|}
\end{equation}
This segregation measure compares the actual number and the maximal possible number of inter-community edges. When the network is completely segregated, $s=1$. Similar definition of segregation can be found in the \emph{segregation index} defined in \cite{sasahara2019inevitability} and the \emph{assortativity coefficient} defined in \cite{rodriguez2016overview}. 
A vivid example of segregation in a directed graph is from Twitter users' retweeting behavior before the 2020 presidential election\cite{luo2021echo}, as shown in Fig. \ref{fig:twitter polarized}. 

\begin{figure}
	\centering
	\includegraphics[width=0.38\textwidth]{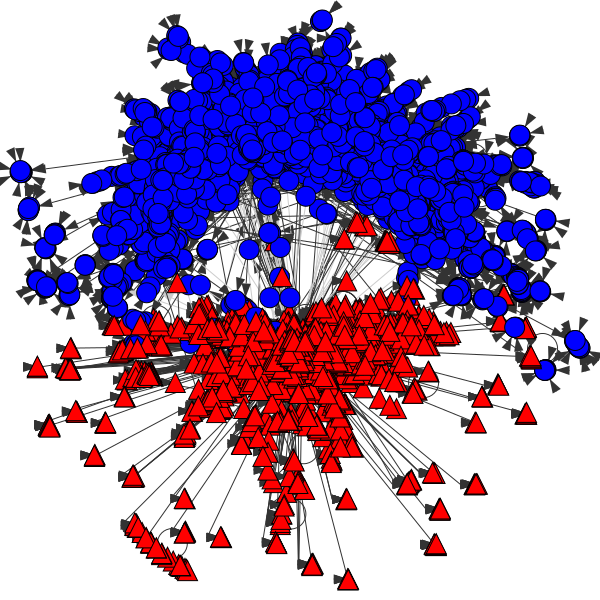}
	\caption{The figure shows that Twitter users' political opinions are segregated into two highly polarized communities before the 2020 presidential election. Nodes represent Twitter users and edges represent retweets during the one-month period before election (Oct. 1 to Nov 1). The graph is laid out using DrL (a force-directed graph layout) and the nodes are assigned different colors and shapes (blue circles and red triangles) according to the two communities detected by the Louvain method \cite{blondel2008fast}. A detailed explanation is in \cite{luo2021echo}.}
	\label{fig:twitter polarized}
\end{figure}

This study aims to control segregation via exogenous incentives, i.e., keep the segregation measure $s$ low by encouraging inter-community connections. 
Specifically, we construct a game which models users' interactions on social networks.
In this game, users are strategic decision makers. When deciding with who to connect,~i.e.,~what edges to form, each user faces a trade off: staying connected to others within the same community vs. obtaining an exogenous incentive for making inter-community connections. 
First, we show that segregation is a natural consequence of the game. Then, we design a mechanism, named \emph{Algorithmic Recommendation Mechanism} (ARM), which is based on the weak-tie theory\cite{granovetter1973strength} and encourages inter-community connections by offering exogenous incentives. With ARM incorporated in the game, we show that connecting with users both in the same community and different community, i.e., integration, is the only rational choice for users in the game to maximize their utilities, which leads to a low segregation measure $s$. In the end of Section \ref{sec:introduction}, some important symbols used in this article are shown in Table 1.  

\subsection{Main Results and Organization}
\noindent(1) By representing the social network as a \emph{directed stochastic block model} (DiSBM)\cite{wilinski2019detectability}, Section \ref{sec: DiSBM} formulates an edge formation game and shows that it leads to a Nash equilibrium where the network is segregated. 

\noindent(2) Section \ref{sec: ARM} proposes an \emph{Algorithmic Recommendation Mechanism} (ARM) which offers users additional rewards by inter-community recommendation. Incorporating this mechanism leads to a Bertrand-like game \cite{milgrom1990rationalizability} with a unique Nash equilibrium where segregation is mitigated.

\noindent(3) Section \ref{sec:markov game} considers the case where ARM's recommendation acceptance probability evolves as a semi-Markov process. Analysis within a stochastic game framework shows that users in the network reach the time-evolving Nash equilibrium of the resulting game.

\noindent(4) Finally, Section \ref{sec: numerical} presents numerical simulations to illustrate how the proposed ARM mitigates segregation and increases inter-community connections. The numerical study also suggests that higher recommendation acceptance probability is effective during polarizing events where the level of segregation is high.

\subsection{Related Work}
Previous works that are related to ours can be considered under three categories:

\noindent(1) {\it Social network segregation:}  
Agent-based opinion dynamics models \cite{sasahara2019inevitability, baumann2020modeling, banisch2019opinion, blex2020positive} are the mainstream approach to study the segregation on social networks. 
Sasahara et al. \cite{sasahara2019inevitability} introduced social influence and unfriending into their model, where users can change both their opinions and connections based on the received information. Baumann et al. \cite{baumann2020modeling} proposed a radicalization mechanism which reinforces extreme opinions from moderate initial conditions. Banisch and Olbrich \cite{banisch2019opinion} considered the social feedback's effect on users expressing alternative opinions and analyzed the sufficient conditions for stable bi-polarization on a stochastic block model-structured network. Blex and Yasseri \cite{blex2020positive} proposed a network-based solution to the Schelling's model and derived that algorithmic bias in the form of rewiring is incapable of preventing segregation. 

In this work, we construct the network as a DiSBM and partition users into two communities (blocks) representing their fixed labels. Similar procedure is considered in \cite{banisch2019opinion, neary2012competing}. More importantly, we adopt a game-theoretic model instead of opinion dynamics to analyze users' decision of forming connections with others, which captures the psychological features of decision-making.

\vspace{0.1cm}
\noindent(2) {\it Nash equilibrium analysis of social networks:} Game theory is a widely used analysis tool for social networks. Jackson and Wolinsky \cite{jackson1996strategic} analyzed the stability and efficiency of social networks when self-interested users can form or cut links in a game setting. Bramoullé et al. \cite{bramoulle2004network} studied the edge formation as individual players choosing their partners in a 2-by-2 anti-coordination games. Avin et al. \cite{avin2018preferential} constructed an evolutionary network formation game and demonstrated that preferential attachment is the unique Nash equilibrium. Alon et al. \cite{alon2010note} formulated a game-theoretic model to deal with the incentives of interested parties outside the network in information diffusion. Mele \cite{mele2017structural} proposed a potential game on a network where user's payoff depends on both directed links and link externalities. Game theoretic models have also been applied to structure identification of industrial cyber-physical systems \cite{zhang2021robust}, conflict mitigation through third party interventions \cite{song2021third}, and trust management mechanism in mobile ad hoc network \cite{li2020exploring}. 

In this work, we propose an edge formation game in which the players (i.e., social network users) face the trade off between connecting within community and forming inter-community connections. The link externalities \cite{mele2017structural} inspire the idea of inter-community friend recommendations in our model, as explained in Section \ref{subsec:algorithmic recommendation}. We analyze the Nash equilibrium of the game to learn whether the network is segregated or not. 

\vspace{0.1cm}
\noindent(3) {\it Weak Ties and segregation mitigation strategies:} In this paper, we propose an \emph{Algorithmic Recommendation Mechanism} (ARM); an overview of the framework is shown in Fig. \ref{fig:algorithmic recommendation}. The ARM incorporates the positive effect of weak ties, or "friends of a friend". Granovetter's work \cite{granovetter1973strength} on Weak Ties-theory demonstrates that a person's weak contacts are more likely to bring novel information such as job opportunities to him compared with his close contacts. Liu et al. \cite{liu2017social} discussed the increased online weak ties with the emergence of new media platforms and functions such as "follow the post", and "retweet". Mele \cite{mele2017structural} proposed a network formation model involving indirect connections, which is an extended version of weak tie. 




\begin{figure}
	\centering
	\includegraphics[width=0.5\textwidth]{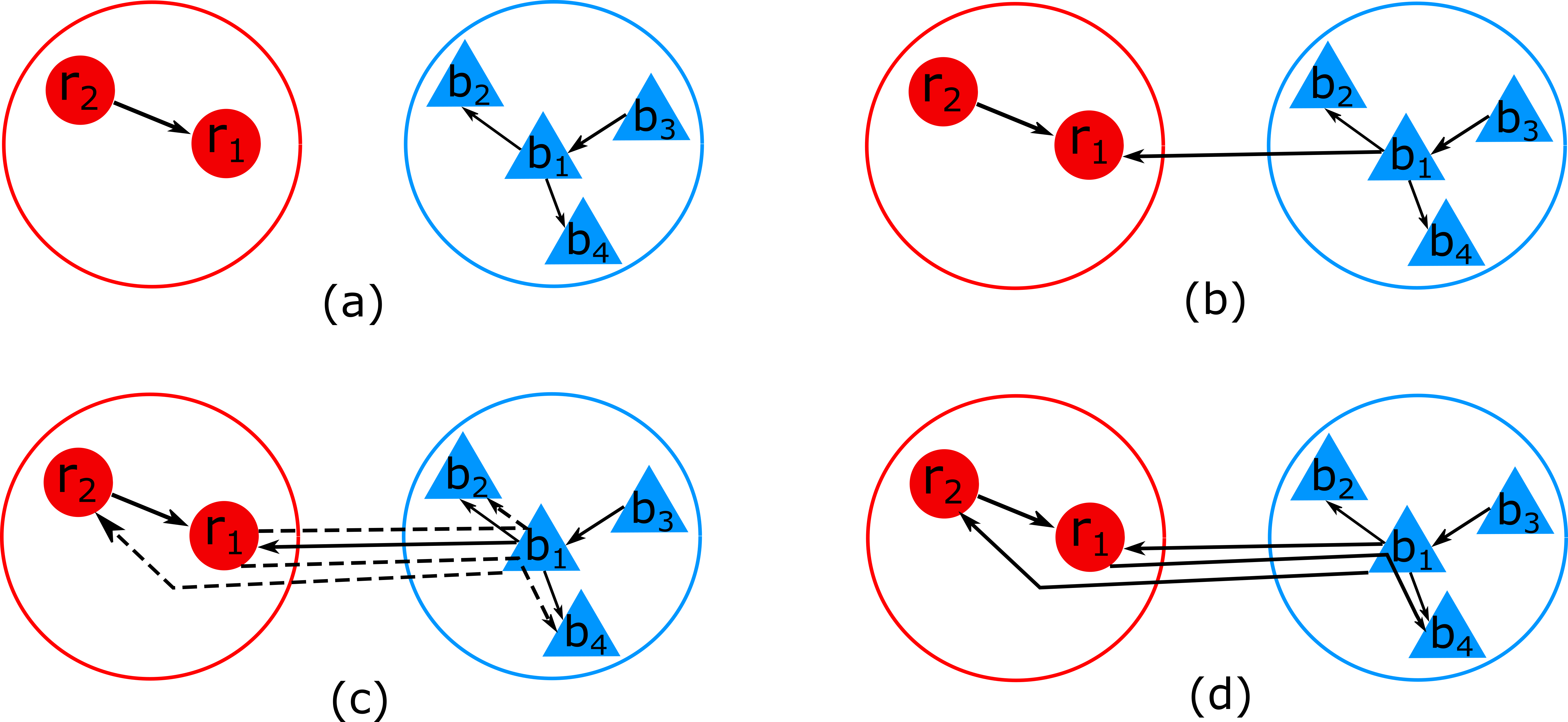}
	\caption{This figure illustrates the proposed \emph{algorithmic recommendation mechanism} (ARM) (details in Algorithm 1). (a) Inside the same community, $b_1$ is followed by $b_2$ and $b_4$, and $r_1$ is followed by $r_2$. (b) Across different communities, $r_1$ follows $b_1$ (i.e., forms an inter-community edge). (c) ARM recommends inter-community links (from $r_1$ to $b_2$ and $b_4$, from $b_1$ to $r_2$) indicated by the three dashed arrows with probability according to (\ref{eq:recommendation prob}). (d) The inter-community recommendations convert into edges (from $b_1$ to $r_2$, from $r_1$ to $b_4$) with some acceptance probability (defined as $C$ in Algorithm 1) and increase users' utility. Therefore, ARM incentivizes users to proactively form inter-community edges and mitigates segregation in social networks.}
	\label{fig:algorithmic recommendation}
\end{figure}

\begin{table} \label{table:symbols}
  \caption{Glossary of Symbols Used in This Article}
  \centering 
  \begin{threeparttable}
    \begin{tabular}{ C{0.05\textwidth}L{0.4\textwidth} } 
    \toprule
    Symbols  & Description \\ 
    \midrule\midrule
    $G^{(t)}$ & the directed graph representing the network at time $t$ \\
    $V$ & the fixed set of users in the network \\
    $E^{(t)}$ & the set of directed edges at time $t$ \\
    $R$ & the fixed set of red users \tnote{*} \\
    $B$ & the fixed set of blue users \\
    $N$ & the number of red users, which also equals the number of blue users \\
    $p_R^{(t)}$ & the probability that a red user follows\tnote{**}  another red user at time $t$ \\
    $p_B^{(t)}$ & the probability that a blue user follows another blue user at time $t$ \\
    $P^{(t)}$ & the probability matrix of the DiSBM corresponding to the network \\
    $U_i^{(t)}$ & user $i$'s utility at time $t$ in the original game \\
    $\arm$ & user $i$'s utility at time $t$ in the game with ARM \\
    $D^{t}(i,j)$ & whether user $j$ follows user $i$ from the different community at time $t$ \\
    $S^{t}(i,j)$ & whether user $j$ follows user $i$ from the same community at time $t$ \\
    $C$ & the probability that a recommendation by ARM is accepted \\
    \midrule\midrule
    \end{tabular}
    \begin{tablenotes}
\item[*] "Red users" is abbreviation for the users from the red community. So is "blue users".
\item[**] Throughout the table and the rest of this paper, we use "follow" interchangeably with "connect with" to denote initiating a directed edge with others, with the edge from the "friend" pointing to the initiator (i.e., "follower").
\end{tablenotes}
\end{threeparttable}
  \end{table}

\section{Directed Network Edge Formation Game}
\label{sec: DiSBM}
In this section, we first propose an edge formation protocol followed by users in the social network. This protocol results in a \emph{directed stochastic block model} (DiSBM) where, the parameters correspond to the actions of users. The edge formation protocol is based on a best response strategy followed by users~i.e.,~users respond to what others did in the previous time instant in order to maximize a utility function. By analyzing the game that corresponds to the best response-based edge formation protocol, our main result of the section shows that it has a unique Nash equilibrium that corresponds to a segregated network~(network with two disconnected communities), and the best response-based edge formation protocol converges to this Nash equilibrium.


\subsection{DiSBM Based Edge Formation Protocol}
\label{subsec: SBM intro}
 
In social networks, users are naturally partitioned into communities based on nodal attributes such as geographic locations, party affiliations, and personal interests \cite{wang1987stochastic}. For example, \cite{conover2011political} demonstrates that the network of political communication on Twitter exhibits a highly segregated partisan structure. The \emph{directed stochastic block model} (DiSBM) \cite{wilinski2019detectability} is frequently used to represent the structure of such networks with communities. 
 
We consider a time-varying version of the DiSBM model with two communities ($N$ red users and $N$ blue users\footnote{We assume equal numbers of red and blue users to simplify the expressions and analysis in the following sections, and this assumption can be relaxed easily.}) where the model parameters correspond to the actions taken by users in each community in a repeated game (i.e., a normal form game which is repeated via a best response strategy adopted by the players). More specifically, at each time instant $t$, user $i$ chooses the edge formation probabilities (i.e., its actions) in a manner that maximizes the expected value of the utility function,
\begin{equation}\label{eq:utility_function}
    U_{i}^{(t)} =  \displaystyle \sum_{j\in R\cup B} \big[D^{(t)}(i, j) - D^{(t)}(j, i) \big].
\end{equation}
where, 
\begin{equation}
\label{eq: D function}%
\resizebox{.91\hsize}{!}{$
    D^{(t)}(i, j) = 
    \begin{cases}
     1 & (i,j) \in E^{(t)} \textrm{ and } i,j\textrm{ in different communities} \\
    0 & (i,j) \notin E^{(t)} \textrm{ or } i,j\textrm{ in the same community.}
    \end{cases}
    $}
\end{equation}

The utility function $U_{i}^{(t)}$ consists of two components. The first component $\displaystyle \sum_{j\in R\cup B} D^{(t)}(i, j)$ is $i$'s number of followers in a different community~i.e.,~its popularity in another community. The second component $\displaystyle \sum_{j\in R\cup B} D^{(t)}(j, i)$ is $i$'s number of friends in different community, representing its efforts in maintaining inter-community connections\footnote{A similar formulation of social network users' utility functions was used in \cite{avin2017assortative}, where the author defines the cut size (i.e., the number of inter-community edges) as part of the utility function. The corresponding network is undirected.}. 
Thus, the utility function (\ref{eq:utility_function}) represents how users face a trade-off between popularity in different community and homophily in its own community when forming connections in social networks.

With this notation, the DiSBM based edge formation via utility maximization is given in Protocol~1.

\noindent\rule{0.5\textwidth}{1pt}
\begin{spacing}{0.55}
\setlength{\parindent}{0pt}
\textbf{Protocol 1.} DiSBM Based Network Edge Formation
\end{spacing}
\noindent\rule{0.5\textwidth}{1pt}

\noindent
{\bf Input: }
$G^{(t)} \textrm{=} \{V, E^{(t)}\}$ where $V \textrm{=} R \cup B$, $t = 0,1,2,\cdots$

\noindent
{\bf Output: }
$p_R^{(t)}, p_B^{(t)}$

\noindent
{\bf Process: }
\begin{enumerate}[\hspace{0in}1)]
\item $p_R^{(0)}, p_B^{(0)} \sim \textrm{Unif}[0,1]$ (i.e.,~the initial parameters of the model are sampled from a uniform distribution).

\item At each odd time instant (i.e.,~$t=1,3,5,\cdots$), red users take actions according to steps 2.1 and 2.2 below while the blue users adhere to the action they adopted at time $t-1$. 
\begin{itemize}
    \item[2.1)] $\forall i,j\in R$, $i$ connects with $j$ with probability\footnote{As will be discussed in the proof of Theorem \ref{th:woARM} (Appendix \ref{ap:th1}), $\mathbb{E}\left\{U_i^{(t)}\right\}$ is identical for any $i\in R$ and is a function of $p_R^{(t)}$.}
\begin{equation}
{p_R^{(t)}} = \displaystyle \argmax_{p_R^{(t)}\in (0,1]} \mathbb{E}\left\{U_i^{(t)}\right\}
\end{equation}
where $\mathbb{E}$ is the expectation with respect to the probability distribution induced by the DiSBM model, and $U_i^{(t)}$ is the utility function defined in (\ref{eq:utility_function}).


\item [2.2)] $\forall i\in R, j\in B$, $i$ connects with $j$ with probability $\frac{1-p_{R}^{(t)}}{N}$.  
\end{itemize}

\item At each even time instant (i.e.,~$t=2,4,6,\cdots$), blue users take actions according to steps 3.1 and 3.2 below while the red users adhere to the action they adopted at time $t-1$. 
\begin{itemize}
    \item[3.1)] $\forall i,j\in B$, $i$ connects with $j$ with probability
\begin{equation}
{p_B^{(t)}} = \displaystyle \argmax_{p_B^{(t)}\in (0,1]} \mathbb{E}\left\{U_i^{(t)}\right\}
\end{equation}

\item [3.2)] $\forall i\in B, j\in R$, $i$ connects with $j$ with probability $\frac{1-p_{B}^{(t)}}{N}$.  
\end{itemize}
\end{enumerate}

\enlargethispage{\baselineskip}
\noindent\rule{0.5\textwidth}{1pt}

\newtheorem*{remark}{Remark 1}
\subsection{Discussion of Protocol 1}
\label{subsec:protocol}
\noindent
{\bf Protocol 1 induces a DiSBM characterised by best response strategy: }

\noindent In Protocol 1, red users and blue users take actions alternatively by maximizing utility functions. More specifically, at each odd time instant $t$, red users choose the probability ${p_R^{(t)}}$ via maximizing the expected utility of a sample uniformly drawn from the red community $R$, while blue users keep the probability ${p_B^{(t)} = {p_B^{(t-1)}}}$. Blue users choose their parameters at even time steps according to similar steps.
Therefore, Protocol 1 can be viewed as a DiSBM with time varying parameters resulting from the best response strategies played by users in the network. In other words, Protocol 1 corresponds to a DiSBM with the probability matrix
\begin{equation}
\label{eq:SBM prob}
    P^{(t)} = \begin{bmatrix}
p_{R}^{(t)} & \frac{1-p_{R}^{(t)}}{N} \\
\frac{1-p_{B}^{(t)}}{N} & p_{B}^{(t)}
\end{bmatrix}
\end{equation}
where $p_R^{(t)}, p_B^{(t)}$ are the best response strategies taken by the red and blue users at each time instant, respectively.

\vspace{0.2cm}
\noindent
{\bf Protocol 1 corresponds to the regime with dense intra- and sparse inter-community edges: }

\noindent In the context of real world social networks, Protocol 1 corresponds to the regime with sparse inter-community edges and dense intra-community edges.\footnote{Note that $P^{(t)}$ does not need to be symmetric considering that the edges are directed; also the rows or columns in $P^{(t)}$ (given in (\ref{eq:SBM prob})) do not need to sum to $1$.} In other words, the the probability matrix $P^{(t)}$ is of the form,
\begin{equation}
\label{eq: SBM structure}
    P_{ab}^{(t)} =
    \begin{cases}
    O(1/N), & a\neq b \\
    O(1), & a=b
    \end{cases}
\end{equation}
A connection probability matrix of the form (\ref{eq: SBM structure}) emulates the structure of many real world social networks where individuals  are densely connected within a community (i.e.,~$O(N^2)$ edges within a community of size $N$) and sparsely connected between communities (i.e.,~$O(N)$ edges between two communities each of size $N$).

Connection probability matrices of the form (\ref{eq: SBM structure}) are further motivated by the fact that users act differently when forming intra-community and inter-community edges in social networks. For example, Twitter users have different political opinions and tend to follow or retweet users with similar opinions. Sports forum (e.g. Reddit) users have different favorite teams and tend to communicate more frequently with users who support the same team. 

\begin{remark}[\emph{The game corresponding to the protocol}]
\label{rm:game&protocol}
\normalfont
Note that 
there exists a normal-form game corresponding to Protocol~1 where, set of nodes $V=R\cup B$ are the players, the interval $(0,1]$ is the set of actions for each player $i \in V$ and, the utility function of each $i \in V$ is
$\mathbb{E}_{j\in R}\left\{ U_{j}^{(t)}\right\}$ or $\mathbb{E}_{j\in B}\left\{ U_{j}^{(t)}\right\}$, i.e., the expected utility function of users in one community depending on the community $i$ belongs to (where $U_{j}^{(t)}$ is given in (\ref{eq:utility_function}) and $t$ is any fixed time instant). When this normal-form game is of a special type (e.g.~a strictly dominant strategy profile, a submodular game), the best response based Protocol~1 is guaranteed to converge to the game's Nash equilibrium. Henceforth, for any such best response based protocol, we use the term \emph{the game corresponding to the protocol} to refer to this induced normal-form game by that protocol.
\end{remark}

\newtheorem{theorem}{Theorem}
\subsection{Nash Equilibrium Analysis of the Game Corresponding to Protocol 1}
\label{subsec: game analysis}
Thus far (in Sec.~\ref{subsec: SBM intro} and \ref{subsec:protocol}) we discussed the intuition behind the DiSBM based edge formation (Protocol~1) in which the network parameters arise from each community maximizing a utility function~(i.e.,~a best response strategy). In this subsection, we analyze the game that corresponds to Protocol~1~(recall from Remark~1 that this is the induced normal-form game) and the main result of this subsection (Theorem~\ref{th:woARM}) indicates that:
\begin{enumerate}[i.]
    \item Protocol~1 converges to a stationary state i.e.,~the sequence of parameters ($p_{R}^{(t)}, p_{B}^{(t)}, t = 1,2,..$) chosen by individuals from each community converges to fixed values $p_{R}, p_{B}$;
    
    \item the fixed values at the stationary state are $p_{R}^{(t)} =1, p_{B}^{(t)} = 1$ and they correspond to the unique Nash equilibrium of the normal-form game corresponding to Protocol~1~i.e.,~at the stationary state, the network will almost surely have no inter-community edges, leading to echo chambers.
\end{enumerate}


\begin{theorem}[Convergence of Protocol~1 to the Nash Equilibrium]
\label{th:woARM}
Consider the best response dynamics given in Protocol~$1$ (Sec.~\ref{subsec: SBM intro}). Segregation (i.e., $p_{R}^{(t)}\textrm{=} p_{B}^{(t)}\textrm{=} 1$) is the unique Nash equilibrium of the corresponding game and $p_{R}^{(t)}, p_{B}^{(t)}$ both converge to it as time $t$ tends to infinity.
\end{theorem}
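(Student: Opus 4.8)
The plan is to reduce the equilibrium analysis to a direct computation of each community's expected utility, exploiting the exchangeability of users within a community under the DiSBM. First I would fix an odd time instant $t$ and a representative red user $i\in R$; the within-community symmetry of the block model guarantees (as the footnote asserts) that $\mathbb{E}\{U_i^{(t)}\}$ is the same for every red user, so maximizing for one representative suffices. Using the convention that $D^{(t)}(i,j)=1$ encodes an inter-community edge in which $j$ follows $i$, I would split $\mathbb{E}\{U_i^{(t)}\}$ into its two sums by linearity of expectation and evaluate each against the probability matrix $P^{(t)}$ in (\ref{eq:SBM prob}).

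The key step is to notice the asymmetric dependence of these two sums on the decision variable $p_R^{(t)}$. The popularity term $\sum_{j} D^{(t)}(i,j)$ counts blue users following $i$; each such edge is initiated by a blue user and so occurs with probability $(1-p_B^{(t-1)})/N$, and summing over the $N$ blue users gives $1-p_B^{(t-1)}$, a quantity frozen by the blue community's previous action and independent of $p_R^{(t)}$. The effort term $\sum_{j} D^{(t)}(j,i)$ counts blue users that $i$ itself follows; these edges are initiated by $i$ with probability $(1-p_R^{(t)})/N$ each, summing to $1-p_R^{(t)}$. Hence $\mathbb{E}\{U_i^{(t)}\} = (1-p_B^{(t-1)}) - (1-p_R^{(t)}) = p_R^{(t)} - p_B^{(t-1)}$, which is affine and strictly increasing in $p_R^{(t)}$.

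From here the conclusion follows almost immediately. Because the expected utility is strictly increasing in the own action and the opponent-controlled term drops out entirely, the best response on $(0,1]$ is the boundary point $p_R^{(t)}=1$ regardless of $p_B^{(t-1)}$; that is, $p_R=1$ is a strictly dominant strategy, and the symmetric computation at even instants yields $p_B=1$ as the dominant action for the blue community. A strictly dominant strategy profile is the unique Nash equilibrium of the corresponding normal-form game (Remark~1), and since each community plays its dominant action already at its first update, the best-response dynamics of Protocol~1 set $p_R^{(t)}=1$ for all odd $t\geq 1$ and $p_B^{(t)}=1$ for all even $t\geq 2$, which establishes convergence. Finally I would confirm that this equilibrium is the segregated one by substituting $p_R=p_B=1$ into (\ref{eq:SBM prob}): both off-diagonal entries vanish, so almost surely no inter-community edges form and the segregation measure attains $s=1$.

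I expect the only genuine obstacle to be bookkeeping rather than mathematics. Under the ``friend points to follower'' convention one must track carefully which endpoint initiates each inter-community edge, so that the correct factor ($(1-p_R^{(t)})/N$ for edges $i$ initiates versus $(1-p_B^{(t-1)})/N$ for edges pointing at $i$) is attached to each sum and the index $j$ is restricted to the opposite community. Once this is pinned down, the cancellation of the $N$ factors and the vanishing of the opponent-dependent term make the dominant-strategy structure transparent, so no fixed-point or potential-function argument is required.
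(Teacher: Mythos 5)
Your proposal is correct and follows essentially the same route as the paper's proof: reduce to a representative player per community by exchangeability, compute the expected utility as $p_R^{(t)}-p_B^{(t)}$ (your bookkeeping of which endpoint initiates each edge reproduces the paper's equations exactly), and conclude that $p=1$ is strictly dominant for each community, giving the unique Nash equilibrium and convergence after each community acts once. No gaps.
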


\begin{proof}
The proof of Theorem \ref{th:woARM} is straightforward with the main idea being that the Nash equilibrium ($p_{R}^{(t)}\textrm{=} p_{B}^{(t)}\textrm{=} 1$) corresponds to users' strictly dominant strategy. The detailed argument is given in Appendix~\ref{ap:th1} for completeness.
\end{proof}

\section{How Algorithmic Recommendation Mechanism (ARM) Reshapes the Segregation Equilibrium}
\label{sec: ARM}

Sec.~\ref{sec: DiSBM} showed that segregation is the Nash equilibrium of the game corresponding to Protocol~1. This leads us to the following question: how can we augment Protocol 1 such that, 
\begin{enumerate}[i.]
    \item the normal-form game corresponding to the augmented protocol has a Nash equilibrium that is not segregation;
    
    \item the strategy profiles of the players converge to the Nash equilibrium (in the previous point) over time. 
\end{enumerate}
To this end, this section presents an algorithmic recommendation mechanism (ARM),  which incentivizes users to form inter-community edges. 
Then, we illustrate how ARM changes the Nash equilibrium of the corresponding game (compared with the segregation result shown in Theorem 1) and mitigates segregation in the network. 
\subsection{Algorithmic Recommendation Mechanism (ARM)}
\label{subsec:algorithmic recommendation}

The aim of introducing ARM is to reshape the Nash equilibrium of the game corresponding to Protocol~1 (which leads to a segregated network as shown in Sec.~\ref{subsec: game analysis}) such that the resulting network is not segregated.
To this end, ARM incentivizes users to form inter-community edges by providing exogenous rewards.
More specifically, if $i$ follows $j$ in the different community, and $j$ has not yet followed $i$, ARM will recommend $i$ to $j$ with a probability proportional to the number of 2 hop connections from $i$ to $j$. Then, the link recommendation is accepted by $j$ with some probability, thus increasing $i$'s popularity in another community and its utility (as indicated in the utility function (\ref{eq:utility_function})). 
Fig. \ref{fig:algorithmic recommendation} shows an example of ARM recommending links.

Recall the function $D^{(t)}$ defined in (\ref{eq: D function}) to count the number of edges between different communities. Similarly, we define another function: 
\begin{equation}
\label{eq: S function}
\resizebox{.91\hsize}{!}{$
    S^{(t)}(i, j) = 
    \begin{cases}
    1 & (i,j) \in E^{(t)} \textrm{ and } i,j\textrm{ in the same community}  \\
    0 & (i,j) \notin E^{(t)} \textrm{ or } i,j\textrm{ in different communities.}
    \end{cases}
    $}
\end{equation}
Thus, for a given pair of nodes $i, j$, $S^{(t)}$ indicates whether an edge $(i,j)$ exists and whether it is between the same community. With this notation, we are now equipped to present the ARM.

\begin{figure}
	\centering
	\includegraphics[width=0.35\textwidth]{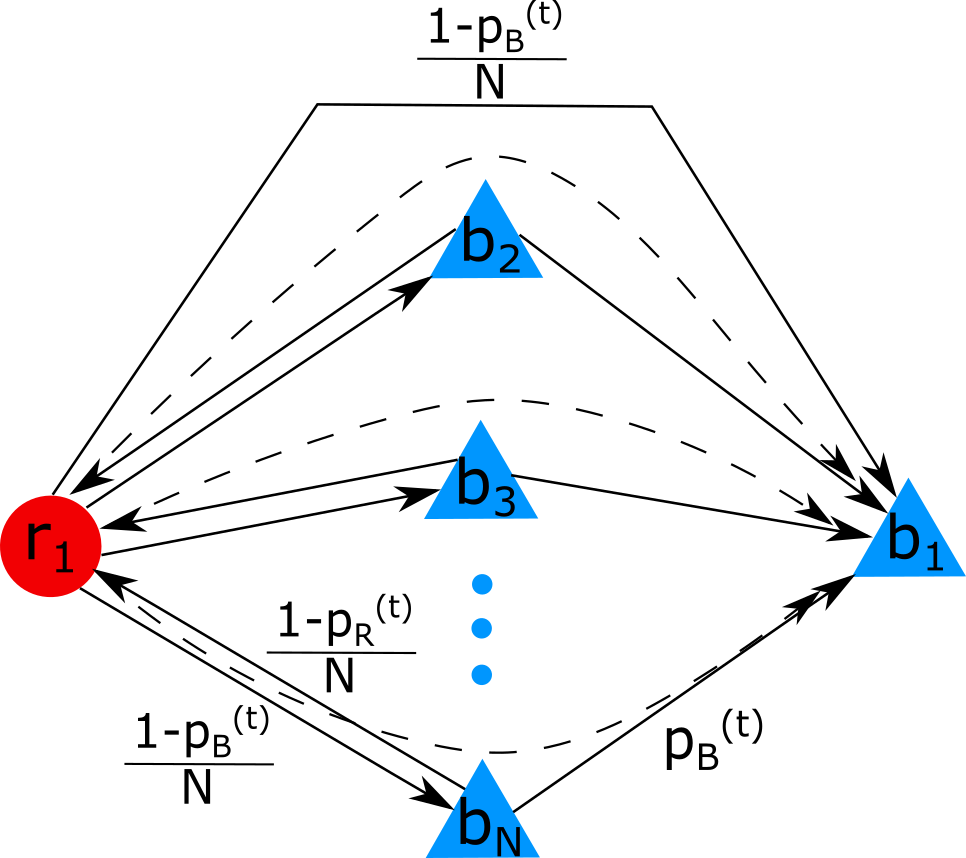}
	\caption{This figure illustrates how ARM (Sec. \ref{subsec:algorithmic recommendation} Algorithm $1$) recommends an inter-community link from red user $r_1$ to blue user $b_1$ based on the existing 2 hop connections. The dashed arrow curves denote the possible 2 hop connections from $r_1$ to $b_1$. 
	The solid arrow line on top denotes the probability that $b_1$ already follows $r_1$ (i.e., edge $(r_1, b_1)$ already exists), in which case ARM will not recommend $r_1$ to $b_1$. If edge $(r_1, b_1)$ does not exist yet, ARM will recommend $r_1$ to $b_1$ with a probability proportional to the number of 2 hop connections between $r_1$ and $b_1$. 
	Labels on solid arrow lines denote corresponding edges' formation probabilities.}
	\label{fig:possible weak ties}
\end{figure}

\noindent\rule{0.5\textwidth}{1pt}
\begin{spacing}{0.65}
\setlength{\parindent}{0pt}
\textbf{Algorithm 1.} Algorithmic Recommendation Mechanism (ARM)
\end{spacing}
\noindent\rule{0.5\textwidth}{1pt}
{\bf Input: } $E^{(t)}$ -- the edge set at time $t$ and $R,B$ -- the two node sets representing two communities.

\noindent
{\bf Output: } $\{(i, j)\}$ -- the set of recommendation links.

\noindent
{\bf Process: }
At time $t$, for any ordered pair $(i,j)$ such that $i,j$ are in different communities, ARM forms an edge from $i$ to $j$ according to steps below:
\begin{enumerate}[\hspace{0in}1)]
  \item If $j$ follows $i$, i.e., $(i, j)\in E^{(t)}$, then stop recommendation; otherwise proceed to step $2$.
  \item 
  For any user $j'$ such that $S^{(t)}(j', j)=1$
  , count the total number of edges between such $j'$ and $i$, i.e., $\sum_{j'} D^{(t)}(i, j') + D^{(t)}(j', i)$. 
Recommend $i$ to $j$ with probability
  \begin{equation}
  \label{eq:recommendation prob}
      \frac{\displaystyle \sum_{j'\in R\cup B}  \big[D^{(t)}(i, j') + D^{(t)}(j', i)\big] 
  S^{(t)}(j', j)}{\textstyle N-1}
  \end{equation}
  where
  \begin{equation}
  \label{eq:2 hop}
      \displaystyle \sum_{j'\in R\cup B}  \big[D^{(t)}(i, j') + D^{(t)}(j', i)\big] 
  S^{(t)}(j', j)
  \end{equation} is the number of 2 hop connections from i to j, and $N-1$ is used to offset the linear growth of 2 hop connections (between two users) to the number of users in the network.
  
  \item If $i$ was recommended to $j$ in step 2, 
  $j$ accepts the recommendation (i.e., the edge $(i, j)$ is formed) according to a Bernoulli random variable with probability of success $C$, where $C$ is the \emph{acceptance probability}.
  
\end{enumerate}

\noindent\rule{0.5\textwidth}{1pt}

\vspace{0.2cm}
\noindent
{\bf Discussion of Algorithm $1$: }

\noindent (1) Step $1$ considers the fact that an edge ARM recommends may already exist. In that case, no further steps are taken by ARM. 

\noindent (2) Step $2$ recommends inter-community links with probability according to (\ref{eq:recommendation prob}), which favors recommending links between users with more 2 hop connections. To justify how 2 hop connections increase the recommendation probability, suppose $i$ is a red user and $j$ a blue user, then as indicated in (\ref{eq:2 hop}), $\displaystyle \sum_{j'\in R\cup B} D^{(t)}(i, j') S^{(t)}(j', j)$ specifies the case that $i$ is followed by $j$'s blue friends; $\displaystyle \sum_{j'\in R\cup B} D^{(t)}(j', i) S^{(t)}(j', j)$ specifies the case that $i,j$ share blue friends in common. In both cases, the bigger the value is, the more likely that ARM will recommend $i$ to $j$.


\noindent (3) Step $3$ considers the fact that each recommended link turns into an actual edge according to a Bernoulli random variable. In practice, the probability of success depends on the amount of incentive provided by the network administrator in the form of exogenous reward.

\noindent (4) We provide the complexity analysis of ARM as follows. Step 2 queries the union of $j$'s follower list and $i$'s follower list/friend list. With hash table based data structure, the query takes $O(1)$ time and the space complexity of the hash table is $O(N)$. The overall time complexity and space complexity are both $O(N^2)$. There are specialized data structure for real-world large-scale graphs which is out of scope in this work, and we refer interested readers to a survey of graph database models\cite{angles2008survey}.

\subsection{Edge Formation Protocol with ARM}
\label{subsec: protocol 2}
To explore how ARM affects segregation, we first augment the utility function (\ref{eq:utility_function}) in a manner that reflects the trade-off between staying connected within the same community and obtaining exogenous incentives provided by ARM. The augmented utility function is
\begin{equation}\label{eq:utility_function_2}
\resizebox{.89\hsize}{!}{$
\begin{aligned}
    \scriptstyle \arm = & \displaystyle \sum_{j\in R\cup B} \big[ \scriptstyle D^{(t)}(i, j) - D^{(t)}(j, i) \big] 
    + \displaystyle \sum_{j\in R\cup B} \Big\{\scriptstyle C \big[\scriptstyle 1-D^{(t)}(i, j) \big] \\
    & \frac{\displaystyle \sum_{j'\in R\cup B} \scriptstyle \big[D^{(t)}(i, j') + D^{(t)}(j', i)\big] 
    \scriptstyle S^{(t)}(j', j)}{\scriptstyle N-1} \Big\} \\
    & + \displaystyle \sum_{j\in R\cup B} \displaystyle \sum_{i'\in R\cup B} \Big\{\scriptstyle C \big[\scriptstyle 1-D^{(t)}(j, i') \big] \scriptstyle \big[\frac{S^{(t)}(i, i') D^{(t)}(j, i)}{\scriptstyle N-1} \big] \Big\}.
\end{aligned}
$}
\end{equation}
In (\ref{eq:utility_function_2}), the first summation term is the original utility as in (\ref{eq:utility_function}). The second summation term 
\begin{equation}
\resizebox{.89\hsize}{!}{$
    \displaystyle \sum_{j\in R\cup B}
    \Big\{\scriptstyle C \big[\scriptstyle 1-D^{(t)}(i, j) \big] \\
    \frac{\displaystyle \sum_{j'\in R\cup B} \scriptstyle \big[D^{(t)}(i, j') + D^{(t)}(j', i)\big] 
    \scriptstyle S^{(t)}(j', j)}{\scriptstyle N-1} \Big\}
    $}
\end{equation}
is the expected number of inter-community edges formed by ARM for user $i$. The third summation term 
\begin{equation}
    \displaystyle \sum_{j\in R\cup B} \displaystyle \sum_{i'\in R\cup B} \Big\{\scriptstyle C \big[\scriptstyle 1-D^{(t)}(j, i') \big] \scriptstyle \big[\frac{S^{(t)}(i, i') D^{(t)}(j, i)}{\scriptstyle N-1} \big] \Big\}    
\end{equation}
is the reward for connecting users from two communities (e.g. a red user connects its red followers to its blue friends) in ARM recommendations. These two summation terms represent ARM's exogenous incentives, where the users' benefit of connecting with others depends on the composition of friends of friends\footnote{A similar assumption was used in \cite{mele2017structural} and \cite{de2009ethnic}, where the value of indirect links (i.e., weak ties) affects users’ cost of linking.}.


The edge formation via maximizing the augmented utility (\ref{eq:utility_function_2}) is given in Protocol $2$ below.


\noindent\rule{0.5\textwidth}{1pt}
\begin{spacing}{0.55}
\setlength{\parindent}{0pt}
\textbf{Protocol 2.} DiSBM Based Network Edge Formation with ARM
\end{spacing}
\noindent\rule{0.5\textwidth}{1pt}
\noindent
{\bf Input: }
$G^{(t)} \textrm{=} \{V, E^{(t)}\}$ where $V \textrm{=} R \cup B$, $t = 0,1,2,\cdots$

\noindent
{\bf Output: }
$p_R^{(t)}, p_B^{(t)}$

\noindent
{\bf Process: }
\begin{enumerate}[\hspace{0in}1)]
\item $p_R^{(0)}, p_B^{(0)} \sim \textrm{Unif}[0,1]$.

\item At each odd time instant (i.e.,~$t=1,3,5,\cdots$), red users take actions according to steps 2.1 and 2.2 below while the blue users adhere to the action they adopted at time $t-1$. 
\begin{itemize}
    \item[2.1)] $\forall i,j\in R$, $i$ connects with $j$ with probability\footnote{As will be discussed in the proof of Theorem \ref{th:wARM} (Appendix \ref{ap:th2}), $\mathbb{E}\left\{\arm\right\}$ is identical for any $i\in R$ and is a function of $p_R^{(t)}$.}

\begin{equation}
{p_R^{(t)}} = \displaystyle \argmax_{p_R^{(t)}\in (0,1]} \mathbb{E}\left\{\arm\right\}
\end{equation}
where $\mathbb{E}$ is the expectation with respect to the probability distribution induced by the DiSBM model, and $\arm$ is the augmented utility function defined in (\ref{eq:utility_function_2}).

\item [2.2)] $\forall i\in R, j\in B$, $i$ connects with $j$ with probability $\frac{1-p_{R}^{(t)}}{N}$. 
\end{itemize}

\item At each even time instant (i.e.,~$t=2,4,6,\cdots$), blue users take actions according to steps 3.1 and 3.2 below while the red users adhere to the action they adopted at time $t-1$. 
\begin{itemize}
    \item[3.1)] $\forall i,j\in B$, $i$ connects with $j$ with probability
\begin{equation}
{p_B^{(t)}} = \displaystyle \argmax_{p_B^{(t)}\in (0,1]} \mathbb{E}\left\{\arm\right\}
\end{equation}

\item [3.2)] $\forall i\in B, j\in R$, $i$ connects with $j$ with probability $\frac{1-p_{B}^{(t)}}{N}$.  
\end{itemize}

\item At each time instant, ARM forms inter-community edges according to Algorithm $1$.
\end{enumerate}

\enlargethispage{\baselineskip}
\noindent\rule{0.5\textwidth}{1pt}  

\subsection{Nash Equilibrium Analysis of the Game with ARM}
\label{subsec: game analysis 2}

In the previous subsection (Sec. \ref{subsec: protocol 2}), we proposed Protocol~2 where users play the best response in a network incorporated with ARM. In this subsection, we analyze the game corresponding to Protocol~2~(recall Remark~1) and show that:
\begin{enumerate}[i.]
    \item the game has a submodular structure similar to the Bertrand game \cite{milgrom1990rationalizability}, which guarantees that Protocol 2 converges to a fixed value $(p_R,p_B)$; 
    \item the fixed value at the steady state depends on the \emph{acceptance probability} $C$ (defined in Algorithm $1$). If $C>\frac{1}{2}$, the Nash equilibrium of the corresponding game leads to social integration, i.e., the only rational choice for users to maximize their utilities is to connect with both users in the same community and different community.
\end{enumerate}


\begin{theorem}[Convergence of Protocol 2 to the Nash Equilibrium]
\label{th:wARM}
Consider the best response dynamics given in Protocol 2 (Sec. \ref{subsec: protocol 2}). If the acceptance probability (i.e., $C$ defined in Algorithm 1) is greater than $\frac{1}{2}$, then segregation is mitigated (i.e., $p_{R}^{(t)}<1,\,p_{B}^{(t)}<1$) at the unique Nash equilibrium of the corresponding game, and $p_{R}^{(t)}, p_{B}^{(t)}$ both converge to the Nash equilibrium as time $t$ tends to infinity.
\end{theorem}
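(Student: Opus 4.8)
The plan is to collapse the community-level game into a two-variable optimization by first obtaining a closed form for the expected augmented utility $\mathbb{E}\{\arm\}$ as a function of the two strategies $(p_R,p_B)$, and then analyzing the resulting best-response maps. As a preliminary step I would justify (as asserted in the footnote to Step~2.1) that by exchangeability of the nodes within each block of the DiSBM, $\mathbb{E}\{\arm\}$ is identical for every $i\in R$ and depends only on $(p_R,p_B)$, with the symmetric statement for blue nodes. This legitimizes treating the two communities as the two players of the normal-form game of Remark~1, each with action set $(0,1]$.

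The core of the argument is to evaluate the expectation of each of the three summands in \eqref{eq:utility_function_2} for a red player, using that in the DiSBM distinct ordered pairs receive edges independently with the probabilities in $P^{(t)}$ of \eqref{eq:SBM prob}. I would (i) note the first term has expectation $(1-p_B)-(1-p_R)=p_R-p_B$; (ii) match each indicator in the two-hop factor to an edge of $P^{(t)}$ --- $S^{(t)}(j',j)$ is a blue--blue edge (probability $p_B$), while $D^{(t)}(i,j')$ and $D^{(t)}(j',i)$ are inter-community edges (probabilities $\frac{1-p_B}{N}$ and $\frac{1-p_R}{N}$) --- and, since these involve disjoint pairs, factor the expectation and sum over the $N-1$ candidate intermediaries $j'$ and the $N$ targets $j$; and (iii) carry out the analogous bookkeeping for the third term. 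To leading order in $N$ this yields
\begin{equation*}
\mathbb{E}\{\arm\}\;\approx\;(p_R-p_B)\;+\;C\,p_B(2-p_B-p_R)\;+\;C\,p_R(1-p_R),
\end{equation*}
the exact expression differing only by prefactors $\bigl(1-\tfrac{1-p_B}{N}\bigr)$ and $\bigl(1-\tfrac{1-p_R}{N}\bigr)$ that are $1+O(1/N)$ and do not affect the threshold. Getting these expectations right --- correctly pairing each indicator with an edge and exploiting independence of distinct pairs --- is the main obstacle; everything afterward is calculus and a fixed-point argument.

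With the closed form in hand, differentiating in $p_R$ gives $\partial_{p_R}\mathbb{E}\{\arm\}=1+C-Cp_B-2Cp_R$, so the objective is strictly concave in $p_R$ (second derivative $-2C<0$) and the red best response is the affine, strictly decreasing map $BR_R(p_B)=\tfrac{1+C}{2C}-\tfrac{p_B}{2}$ truncated to $(0,1]$, with the symmetric statement for blue. The negative cross-partial $\partial^2_{p_R p_B}\mathbb{E}\{\arm\}=-C<0$ exhibits the decreasing-differences (submodular, Bertrand-like) structure announced before the theorem. Because each best response is $\tfrac12$-Lipschitz and truncation to $(0,1]$ is nonexpansive, the two-step alternating iteration of Protocol~2, $p_R^{(t)}\mapsto BR_R(BR_B(p_R^{(t)}))$, is a contraction with Lipschitz constant $\tfrac14$; by the contraction mapping principle it has a unique fixed point to which $p_R^{(t)},p_B^{(t)}$ converge, giving both uniqueness of the Nash equilibrium and convergence in one stroke.

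Finally, solving $p=BR_R(p)=BR_B(p)$ at the symmetric fixed point yields $p^{\ast}=\frac{1+C}{3C}$; hence $p^{\ast}<1\iff 1+C<3C\iff C>\tfrac12$. For such $C$ one checks $p^{\ast}\in(\tfrac13,1)$, so the maximizer is interior (the truncation is inactive) and the equilibrium is genuinely $p_R^{(t)}=p_B^{(t)}=p^{\ast}<1$, which is exactly the segregation-mitigation claim. The remaining care is to confirm no boundary equilibrium arises when $C>\tfrac12$, which follows because the unique interior fixed point of the (concave) best-response system already lies in $(0,1)$.
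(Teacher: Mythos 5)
Your proposal is correct and follows essentially the same route as the paper: reduce to a symmetric two-player game, compute the expected augmented utility in closed form as $p_R-p_B+C\bigl[p_B(2-p_R-p_B)+p_R(1-p_R)\bigr]$, derive the affine best responses from the first-order condition, and identify the unique symmetric equilibrium $p^{\ast}=\tfrac{1}{3C}+\tfrac{1}{3}$, which is below $1$ exactly when $C>\tfrac{1}{2}$. The only cosmetic difference is that you certify uniqueness and convergence via a Banach contraction argument on the composed $\tfrac{1}{2}$-Lipschitz best-response maps, whereas the paper invokes submodularity and iterated elimination of strictly dominated strategies with nested intervals shrinking to the same fixed point --- the two arguments are equivalent here.
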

\begin{proof}
The main idea behind the proof of Theorem \ref{th:wARM} is to show that the game corresponding to Protocol 2 is submodular, which guarantees the convergence of users' best response dynamics to the Nash equilibrium.  The detailed proof is given in Appendix~\ref{ap:th2}.
\end{proof}
Theorem \ref{th:wARM} illustrates how the proposed ARM reshapes the segregation equilibrium of the game and results in an integrated network, where users are incentivized to form inter-community connections.

\section{Stochastic Game with Markovian ARM Parameter}
\label{sec:markov game}
Recall that in Protocol 2 (Sec. \ref{subsec: protocol 2}) we assume ARM's link recommendation is accepted according to a Bernoulli random variable with fixed \emph{acceptance probability} $C$. In this section we relax the assumption by allowing $C$ to evolve over time as the sample path of a semi-Markov process. We then modify Protocol 2 to account for this time-variant effect, and illustrate that the modified protocol (Protocol 3) induces best response dynamics which tracks the Nash equilibrium evolving with the time-variant $C$. 

\subsection{Edge Formation Protocol with Markovian Acceptance Probability}
\label{subsec: protocol 3}
\noindent
{\bf Motivation: }
Real world social networks have time-variant level of segregation. We justify this phenomenon by the two motivating examples used in Sec. \ref{subsec:protocol} as follows. During a politically polarizing event (e.g. election or legislation), Twitter users tend to segregate and reduce their connections with others from a different community. Similarly, on sports forums such as Reddit's related channels, users' online communication tend to be more polarized during sports league's season (e.g. NBA's playoff or NCAA's March Madness), while less polarized during off-season. In the literature, \cite{naskar2019modelling} models Twitter users' emotion dynamics as a Markov process. \cite{luo2021echo} characterizes the pattern of Twitter users' retweets during the presidential election using a Markov bridge model. \cite{iyengar2011content, sakaki2010earthquake} utilize Twitter users' messages as signals to a hidden Markov model to determine when an anticipated event (e.g. social activities, sports, weather) starts. 

The seasonal pattern of segregation requires the network administrator to spend time-varying efforts to control it, i.e., spend different efforts on recommendation at different times to encourage inter-community connections. We capture this temporal pattern by modeling the \emph{acceptance probability} $C$ (defined in Algorithm $1$) as a semi-Markov process $C^{(t)}, t=0,1,\cdots$. 

The semi-Markov process of the \emph{acceptance probability} is Markovian only at specified jump instants, i.e., when $C^{(t)}$ transits. As $C^{(t)}$ transits into the next state, it stays there for a state holding time $T_h$, which we assume is a constant. $T_h$ measures the interval between two consecutive transitions of $C^{(t)}$ (i.e., how often the network administrator changes its operation condition), thus it is much longer than (e.g. 100$\times$) the time scale of user's action in the protocol. More precisely, we define the semi-Markov process as follows:
\vspace{0.2cm}

$C^{(t)}$ takes value in a finite state space $S_C=\{C_1, C_2,\cdots, C_n\}$, has an initial state $C^{(0)}\in S_C$, and a Markov transition probability matrix $P$ conditionally independent of users' actions
\begin{equation}
\label{eq: evolve C}
\resizebox{.89\hsize}{!}{$
\begin{aligned}
&P(C^{(t+1)}=j|C^{(t)}=i,p_R^{(t)}, p_B^{(t)})
=P(C^{(t+1)}=j|C^{(t)}=i) \\
&=
\begin{cases}
P_{ij},  &t+1 \textrm{ mod } T_h = 0 \\
1, &t+1 \textrm{ mod } T_h \neq 0 \textrm{ and }i=j\\
0, &t+1 \textrm{ mod } T_h \neq 0 \textrm{ and }i\neq j \\ 
\end{cases}    
\end{aligned}
$}
\end{equation}
where $p_{R}^{(t)}, p_{B}^{(t)}$ are respectively red and blue users' actions at time $t$. 
With these notations, the protocol with Markovian \emph{acceptance probability} is given in Protocol 3.

\noindent\rule{0.5\textwidth}{1pt}
\begin{spacing}{0.55}
\setlength{\parindent}{0pt}
\textbf{Protocol 3.} DiSBM Based Network Edge Formation with Markovian Acceptance Probability
\end{spacing}
\noindent\rule{0.5\textwidth}{1pt}
\noindent
{\bf Input: }
$G^{(t)} \textrm{=} \{V, E^{(t)}\}$ where $V \textrm{=} R \cup B$, $t = 0,1,2,\cdots$

\noindent
{\bf Output: }
$p_R^{(t)}, p_B^{(t)}$

\noindent
{\bf Process: }
\begin{enumerate}[\hspace{0in}1)]
\item $p_R^{(0)}, p_B^{(0)} \sim U[0,1]$; $C^{(0)}$ is the initial \emph{acceptance probability}. 


\item At each odd time instant (i.e.,~$t=1,3,5,\cdots$), red users take actions according to steps 2.1 and 2.2 below while the blue users adhere to the action they adopted at time $t-1$. 
\begin{itemize}
    \item[2.1)] $\forall i,j\in R$, $i$ connects with $j$ with probability
\begin{equation}
{p_R^{(t)}} = \displaystyle \argmax_{p_R^{(t)}\in (0,1]} \mathbb{E}\left\{\arm\right\}
\end{equation}
where $\mathbb{E}$ is the expectation with respect to the probability distribution induced by the DiSBM model, and $\arm$ is the augmented utility function defined in (\ref{eq:utility_function_2}).

\item [2.2)] $\forall i\in R, j\in B$, $i$ connects with $j$ with probability $\frac{1-p_{R}^{(t)}}{N}$. 
\end{itemize}

\item At each even time instant (i.e.,~$t=2,4,6,\cdots$), blue users take actions according to steps 3.1 and 3.2 below while the red users adhere to the action they adopted at time $t-1$. 
\begin{itemize}
    \item[3.1)] $\forall i,j\in B$, $i$ connects with $j$ with probability
\begin{equation}
{p_B^{(t)}} = \displaystyle \argmax_{p_B^{(t)}\in (0,1]} \mathbb{E}\left\{\arm\right\}
\end{equation}

\item [3.2)] $\forall i\in B, j\in R$, $i$ connects with $j$ with probability $\frac{1-p_{B}^{(t)}}{N}$.  
\end{itemize}





\item At each time instant, $C^{(t)}$ evolves according to the transition rule (\ref{eq: evolve C}). ARM forms inter-community edges according to Algorithm $1$ with \emph{acceptance probability} $C^{(t)}$.
\end{enumerate}
\enlargethispage{\baselineskip}
\noindent\rule{0.5\textwidth}{1pt}

\subsection{Nash Equilibrium Analysis of the Game Corresponding to Protocol 3}
\label{subsec:infinite horizon game with varying C}
In Protocol 3, we consider the edge formation process in an infinite horizon where users aim to maximize their discounted sum of payoff. We study Protocol 3 by analyzing a corresponding stochastic game, with the Markovian \emph{acceptance probability} $C^{(t)}$ as the state of the game. The main result of this subsection (Theorem 3) indicates that:
\begin{enumerate}[i.]
    \item Protocol 3 captures user's best response dynamics, which converges to a steady state dependent on $C^{(t)}$. Once $C^{(t)}$ transits to a new state, user's best response will converge to a new steady state;
    \item user's optimal strategy is to myopically optimize its one-stage payoff at each round of the game (with $C^{(t)}$ as the state of the game).
\end{enumerate}

\begin{theorem}[Convergence of Protocol 3 to the time-varying Nash Equilibrium]
\label{th:wARM2}
Consider the best response dynamics given in Protocol 3 (Sec. \ref{subsec: protocol 3}). If the acceptance probability (i.e., $C^{(t)}$) evolves as a semi-Markov process specified in (\ref{eq: evolve C}), then $p_{R}^{(t)}, p_{B}^{(t)}$ both converge to the Nash equilibrium dependent on $C^{(t)}$ before $C^{(t)}$ transits.
\end{theorem}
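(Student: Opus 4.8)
The plan is to treat Protocol~3 as a discounted stochastic game whose state is the acceptance probability $C^{(t)}$, and to exploit two structural features: first, that $C^{(t)}$ evolves independently of the players' actions, as encoded in (\ref{eq: evolve C}); and second, that on any interval over which $C^{(t)}$ is held constant, the one-stage game is exactly the game corresponding to Protocol~2. The conclusion then follows by combining a decoupling argument that reduces the stochastic game to a myopic stage-game problem with the per-epoch convergence already established in Theorem~\ref{th:wARM}.

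First I would set up the discounted objective. Writing the one-stage expected payoff as a function $r(p_R, p_B, C^{(t)})$ of the current actions and the current state, the infinite-horizon payoff is $\sum_{t\ge 0}\gamma^t\,\mathbb{E}\{r(p_R^{(t)}, p_B^{(t)}, C^{(t)})\}$ for some discount factor $\gamma\in(0,1)$. Because the transition law (\ref{eq: evolve C}) makes the distribution of the state trajectory $C^{(0)}, C^{(1)}, \dots$ independent of the chosen actions, each player's current action influences only the current term of the discounted sum: it cannot alter the distribution of any future state, and hence, within the class of Markov-perfect strategies (functions of the observed state), it cannot alter any future reward. This action-independence of the state dynamics is the crux of the myopic-optimality claim in point (ii): the best-response problem at each stage decouples, and the Markov-perfect equilibrium of the stochastic game coincides with the profile that plays, at every time instant, a Nash equilibrium of the one-stage game indexed by the observed state $C^{(t)}$.

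Next I would analyze a single epoch. By the semi-Markov construction, $C^{(t)}$ is constant and equal to some $C_k$ on each block of $T_h$ consecutive time instants. On such a block the one-stage game is precisely the game corresponding to Protocol~2 with fixed acceptance probability $C = C_k$, so the alternating best-response updates in steps~2 and~3 of Protocol~3 coincide, on this block, with the dynamics of Protocol~2. If $C_k > \tfrac{1}{2}$, Theorem~\ref{th:wARM} applies directly: the game is submodular with a unique Nash equilibrium $(p_R(C_k), p_B(C_k))$, and the best-response iterates converge to it. Since the submodular structure forces the iterates to approach the equilibrium monotonically and, under the Bertrand-like contraction identified in the proof of Theorem~\ref{th:wARM}, geometrically fast, while $T_h$ is taken orders of magnitude larger than the time scale of a single update, the trajectory reaches the equilibrium associated with $C_k$ before the next transition. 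When $C^{(t)}$ jumps to $C_{k+1}$ at the end of the block, the same argument restarts with the current configuration as initialization, and $(p_R^{(t)}, p_B^{(t)})$ converges to $(p_R(C_{k+1}), p_B(C_{k+1}))$; iterating over blocks yields convergence to the time-varying Nash equilibrium indexed by $C^{(t)}$.

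The main obstacle I anticipate is making the decoupling argument fully rigorous as a statement about the \emph{game} rather than about a single optimizer: one must verify that the exogeneity of $C^{(t)}$, together with the uniqueness of the stage-game Nash equilibrium supplied by Theorem~\ref{th:wARM}, rules out non-myopic equilibria in which a community sacrifices current payoff to steer the opponent's future actions. Uniqueness of the stage equilibrium is what closes this gap, since it leaves no alternative continuation value to exploit. A secondary, more quantitative point is turning ``converges before $C^{(t)}$ transits'' into a precise statement; this amounts to bounding the contraction rate of the best-response map and choosing $T_h$ large relative to that rate, which is exactly where the assumption that $T_h$ greatly exceeds the update time scale enters.
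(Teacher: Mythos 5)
Your proposal is correct and follows essentially the same route as the paper's proof: formulate Protocol~3 as a discounted stochastic game with exogenous state $C^{(t)}$, use the action-independence of the transition law (via Bellman's recursion in the paper, via your decoupling argument) to reduce equilibrium play to myopic maximization of the stage payoff, and then invoke the per-epoch convergence of Theorem~\ref{th:wARM} together with the assumption that $T_h$ greatly exceeds the update time scale. Your added remarks on ruling out non-myopic equilibria and on quantifying the contraction rate are refinements of the same argument rather than a different approach.
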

\begin{proof}
The proof of Theorem \ref{th:wARM2} is based on the assumption that the transition dynamics of $C^{(t)}$ are conditionally independent of users' actions as shown in (\ref{eq: evolve C}). Therefore before $C^{(t)}$ transits to the next state, users' best response dynamics is similar to that of Protocol 2, i.e., converges to the Nash equilibrium corresponding to $C^{(t)}$ (which is fixed during the state holding time $T_h$). The detailed argument is given in Appendix~\ref{ap:th3} for completeness.
\end{proof}
Theorem \ref{th:wARM2} indicates that user's best response dynamics will converge and reach the time-evolving Nash equilibrium in the game corresponding to Protocol 3.


\section{Numerical Examples of the Edge Formation Game}
\label{sec: numerical}
In this section, we provide numerical results to illustrate how incorporating ARM (Algorithm 1) into the edge formation protocol (Protocol 2) reduces segregation, i.e., users form both intra- and inter-community edges at the Nash equilibrium of the corresponding game.
We also illustrate that higher \emph{acceptance probability} $C$ moves the corresponding game's Nash equilibrium closer towards social integration, i.e., users form more inter-community edges at the Nash equilibrium. 

\subsection{Convergence of Best Response Strategies Corresponding to Protocol 2}
\label{subsec:numerical1}
Recall that in Protocol 2, we propose that users alternatively play their best response strategies. In the following numerical example, we illustrate how
users' best response dynamics converges to
the Nash equilibrium of the corresponding edge formation game. We consider two communities each with $N=20$ users, and they play the edge formation game according to Protocol 2 for $T=20$ time steps. The ARM's \emph{acceptance probability} $C$ (defined in Algorithm 1) is set to be $0.8$. Fig. \ref{fig:best response converge} displays the convergence of two users' best responses to the unique Nash equilibrium $(0.75, 0.75)$. 

The steady state indicates that with ARM incorporated in the protocol and \emph{acceptance probability} $C>\frac{1}{2}$, users are incentivized to connect with users in different community, which illustrates the usefulness
of ARM in mitigating segregation.

\begin{figure}
	\centering
	\includegraphics[width=0.46\textwidth]{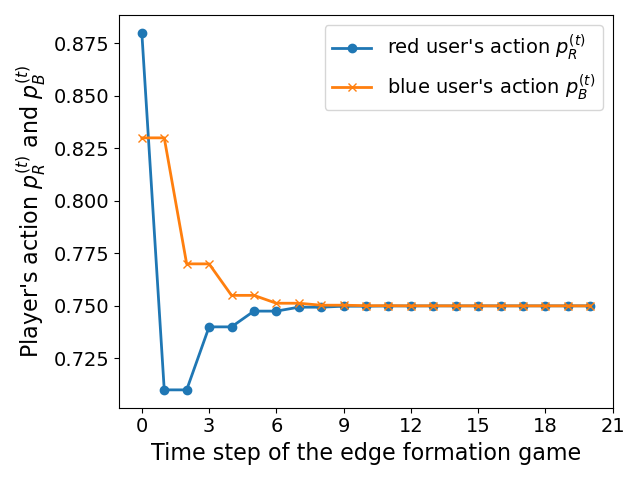}
	\caption{This figure shows the convergence rate of best responses of red player and blue player to the unique Nash equilibrium. With ARM (defined in Algorithm $1$) implemented and the acceptance probability set to $C=0.8$, users from both communities are incentivized to have $\frac{1}{4}$ of their total retweets to users from a different community at the equilibrium. The figure indicates that the best response strategy of the players converges to the Nash equilibrium in less than 10 iterations.} 
	\label{fig:best response converge}
\end{figure}

\subsection{The Effect of Varying Recommendation Acceptance Probability}
\label{subsec:numerical2}

Recall that we set ARM's \emph{acceptance probabilities} to be a fixed value in Protocol 2. In order to visualize the effect of different \emph{acceptance probabilities} (defined in Algorithm $1$) on the edge formation game, we vary the \emph{acceptance probability} from $\frac{2}{3}$ to $1$ and compare the resulting network structure 
sampled from the model at the Nash equilibrium. Fig. \ref{fig:graph structure} shows the social network at the Nash equilibrium
under different \emph{acceptance probabilities}, indicating that inter-community edges become denser with higher \emph{acceptance probability}.


\begin{figure}
	\centering
	\includegraphics[width=0.46\textwidth]{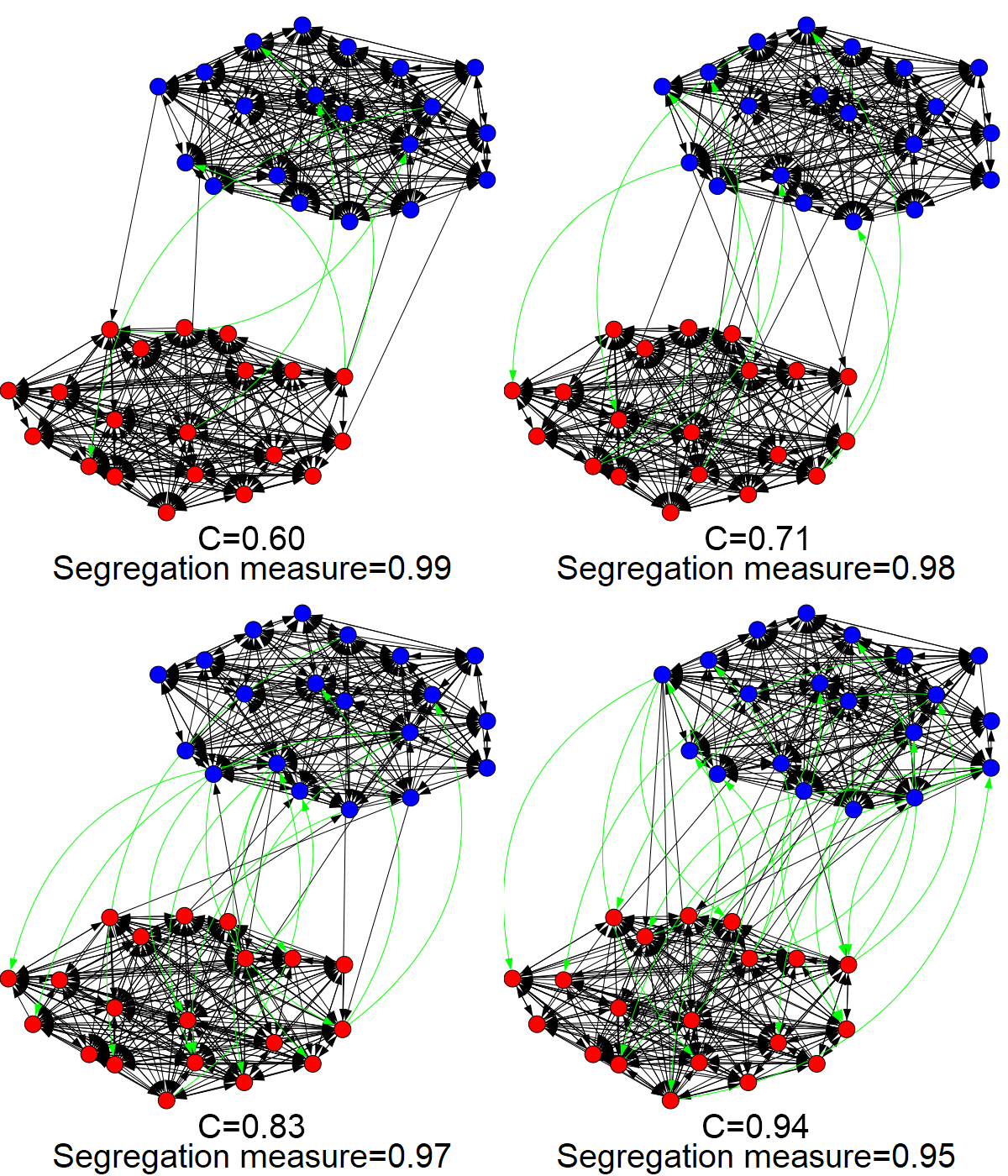}
	\caption{This figure shows the network structure of two communities after the edge formation game reaches Nash equilibrium with different acceptance probability $C$ (defined in Algorithm 1). The green curved inter-community edges represent the friend recommendations provided by ARM (Algorithm $1$). Viewing the four sub-figures in the order of left-to-right then top-to-bottom, the denser inter-community edges and lower modularity (discussed in Sec. \ref{subsec:numerical2} and Fig. \ref{fig:modularity}) indicate that users are more willing to involve in inter-community connections with higher $C$.}
	\label{fig:graph structure}
\end{figure}

We use the segregation measure defined in (\ref{eq:segregation index}) as a quantitative measurement of segregation. Networks with high segregation measure have dense intra-community connections but sparse inter-community connections. Fig. \ref{fig:modularity} shows that segregation measure decreases with higher \emph{acceptance probability} values, which agrees with our observation of the network structure. 

The results indicate that, during polarizing events when users tend to be more segregated, the network administrator can mitigate segregation by spending more efforts on friend recommendations (i.e., to achieve a higher \emph{acceptance probability} $C$) so that more inter-community edges can be formed. The results prove ARM's capability under different social settings. 


\begin{figure}
	\centering
	\includegraphics[width=0.46\textwidth]{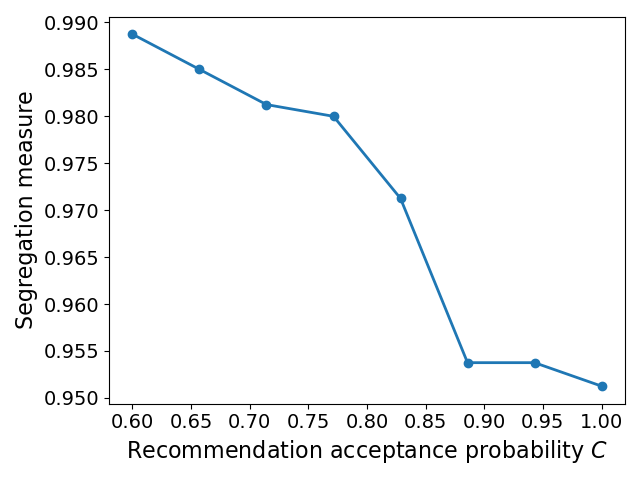}
	\caption{The figure shows that the segregation measure of the network (defined in (\ref{eq:segregation index})) drops with the increase of the \emph{acceptance probability} $C$ (defined in Algorithm 1). Higher segregation measure represents denser intra-community connections and sparser inter-community connections. The figure illustrates that higher \emph{acceptance probability} will better mitigate segregation in social networks.}
	\label{fig:modularity}
\end{figure}

\subsection{Validation of ARM in Opinion Dynamics Model} \label{subsec: opinion dynamics model}
To verify the efficacy of ARM in mitigating segregation, we implement it on the opinion dynamics model proposed in \cite{banisch2019opinion}. In the model, there are two opinions that agents can adopt and express. More precisely, an agent $i$ can adopt $o_i \in \{1, -1\}$. The model also accounts for how confident $i$ is about the two opinions by two real-valued terms $Q_i(1)$ and $Q_i(-1)$. At each step, an agent (say $i$) is chosen at random and expresses its opinion $o_i$ to a randomly chosen neighbor $j$, and $j$ responds to $i$'s expression with agreement or disagreement depending on $o_j$. The $Q_i(o)$ represents an internal evaluation of the opinions based on the social response $i$ obtains on expressing them. The value is updated as
\begin{equation}
    Q_i(o) = 
    \begin{cases}
    (1-\alpha) Q_i(o) + \alpha r_i & \textrm{if $o=$ expression}\\
    Q_i(o) & \textrm{else}
    \end{cases}
\end{equation}
where
\begin{equation}
    r_i = o_i o_j
\end{equation}
leading to a positive feedback for $o_i=o_j$ and to a negative one if $o_i \neq o_j$. The parameter $\alpha$ represents learning rate.

We incorporate ARM into this model as follows. When agent $i$'s expression is disagreed by $j$, $i$ will be recommended to connect with $j$'s neighbors who share the same opinion with $i$. Similar to ARM, the recommendation will be accepted with an acceptance probability $C$. Therefore although $i$ experiences disagreement with $j$, it will gain some confidence in its expression through agreement with $j$'s neighbors. The $Q_i(o)$ is updated as
\begin{equation}
    Q_i(o) = 
    \begin{cases}
    (1-\alpha) Q_i(o) + \alpha r_i & \textrm{if $o=$ expression}\\
    Q_i(o) & \textrm{else}
    \end{cases}
\end{equation}
where 
\begin{equation}
    r_i =
    \begin{cases}
    o_i o_j & \textrm{if $o_i = o_j$}\\
    o_i o_j + C \sum_{k\in N(j), k\neq i, o_k=o_i} o_i o_k &\textrm{if $o_i \neq o_j$} 
    \end{cases}
\end{equation}

We use the parameters provided by the authors\footnote{See Section 3.2 \cite{banisch2019opinion} for implementation details.}. 
The network is a random geometric graph with neighborhood radius $r=0.175$. 
There are $N=100$ agents. 
$Q$ take random initial values in $\text{Unif} (-0.5, 0.5)$. 
The learning rate $\alpha=0.05$. 
Follow the authors' implementation, we also set an exploration rate $\epsilon =0.1$, which measures the probability that agents express their less favorable opinions. 
We set the recommendation acceptance probability $C=0.9$.

\begin{figure}
	\centering
	\includegraphics[width=0.5\textwidth]{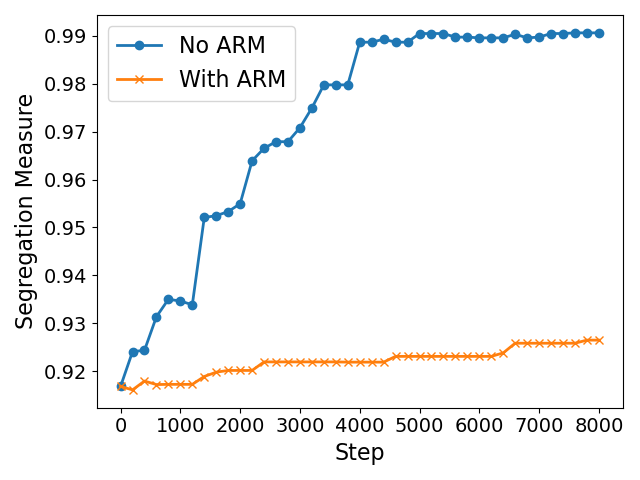}
	\caption{The figure shows the trend of the segregation measure (defined on a network in (\ref{eq:segregation index})) when agents follow the opinion dynamics model without and with ARM (defined in Section \ref{subsec: opinion dynamics model}). When agents follow the model without ARM, the segregation measure of the network keeps increasing and maintains at a high level in the final stage, whereas the segregation measure remains at a lower level when agents follow the model with ARM.}
	\label{fig:opinion model segregation}
\end{figure}

As shown in Fig. \ref{fig:opinion model segregation}, when agents follow the opinion dynamics model without ARM, the segregation measure of the network keeps increasing and maintains at a high level in the final stage, whereas the segregation measure remains at a relatively low level when agents follow the model is incorporated with ARM. Fig. \ref{fig:opinion model structure} compares the evolution of the network's community structure. The network is segregated into nearly disconnected components when ARM is not incorporated. On the contrary, the network is less segregated when ARM is incorporated. This numerical validation in opinion dynamics model verifies the efficacy of ARM in mitigating network segregation.
\begin{figure}
	\centering
	\includegraphics[width=0.5\textwidth]{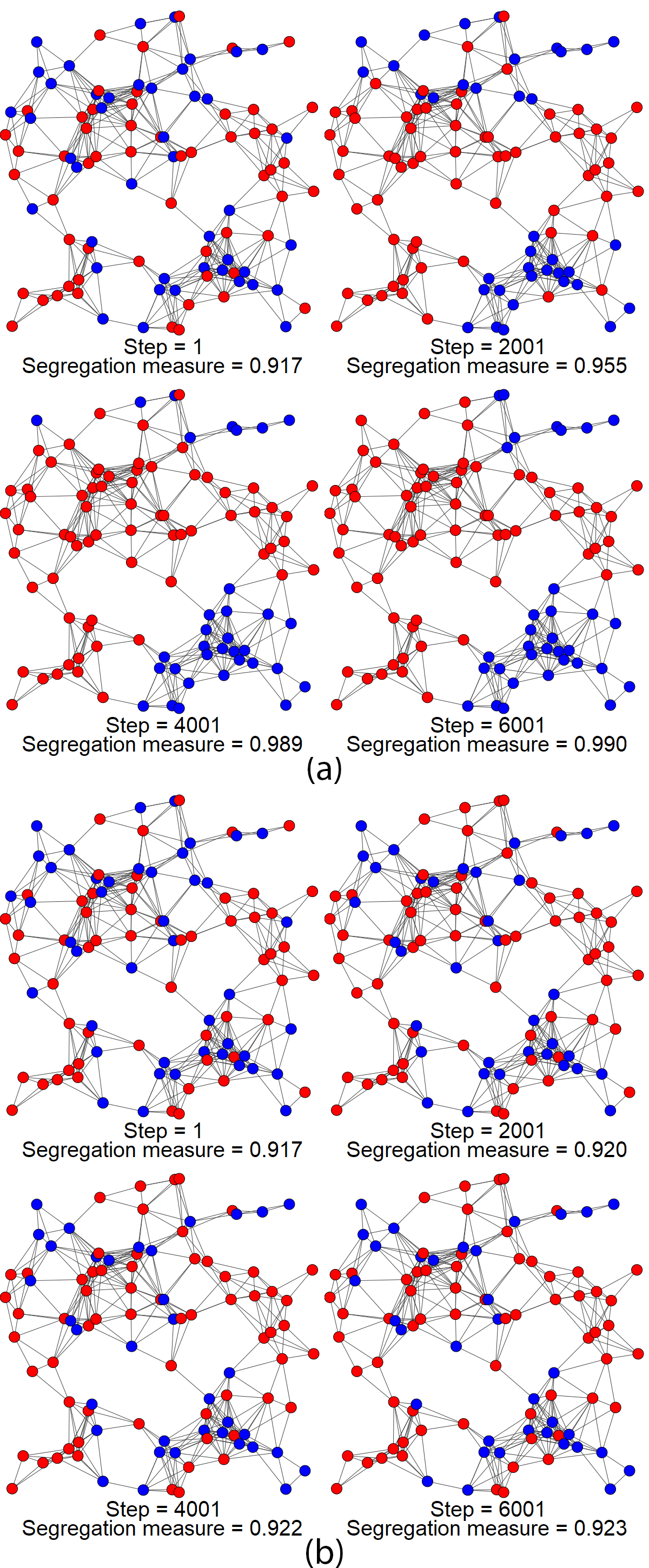}
	\caption{The figure shows the different evolution pattern of community structure in the model without and with ARM (defined in Section \ref{subsec: opinion dynamics model}). (a) shows that in the model without ARM, the community structure gradually segregates into nearly disconnected components; (b) shows that in the model with ARM, the community structure is less segregated.}
	\label{fig:opinion model structure}
\end{figure}

\section{Conclusions and Extensions}
\noindent
{\bf Conclusions: }
This paper considered an edge formation protocol on social networks represented by a \emph{directed stochastic block model} (DiSBM) to describe how users choose to connect with each other. 
The edge formation protocol represents how each individual chooses the connection probabilities in order to maximize a utility function that represents the tradeoff between homophily~(preference to be connected with one's own group) and popularity in the different community. Analysis of the game that corresponds to the best response based protocol shows that segregation is the unique Nash equilibrium. 
We then proposed an algorithmic recommendation mechanism (ARM) to mitigate segregation. ARM recommends users from different communities to form weak ties and provides incentives to those that form them. Assuming each recommendation suggested by the ARM is accepted with an \emph{acceptance probability}, we show that the segregation level at the Nash equilibrium of the corresponding game depends on it. Thus, the segregation of the network can be controlled by introducing the ARM. We further extend our results to the case where the \emph{acceptance probability} itself has Markovian dynamics and illustrate how the individuals in the network reach the time-evolving Nash equilibrium of the resulting game. Thus, our results provide a novel mechanism design perspective into the problem of mitigating segregation in social networks.

\vspace{0.2cm}
\noindent
{\bf Limitations and Extensions: }
The proposed edge formation game and its analysis can be extended to further contexts in several directions.
\begin{enumerate}[\hspace{0in}1)]
    \item The edge formation protocol proposed in this paper assumes homogeneity among users in one community, i.e., users' community information solely determines their edge formation probabilities. 
    An interesting future direction is to incorporate heterogeneity (e.g., preferential attachment with fitness) into the network model and apply methods such as friendship paradox sampling \cite{nettasinghe2019friendship,nettasinghe2019maximum,krishnamurthy2019information} to assign different weights to the 2 hop connections between different pairs of users. 
    
    

    \item Another interesting direction is to consider ARM's parameter (e.g., the recommendation acceptance probability) depends on users' actions and incorporates a feedback law, i.e., ARM raises the acceptance probability when users become more segregated in network. Such extensions might be helpful in analyzing the existence of Markov Perfect Equilibrium (MPE) of the corresponding stochastic game, which enhances the understanding of users' long-term strategy in an evolving social network.
    
    \item The segregation measure (\ref{eq:segregation index}) based on inter-community edges can be further applied in other network topics such as community detection\cite{li2012identifying,li2019dynamical,li2020optimization}, and estimation of latent network factors\cite{li2020optimal}.

    
\end{enumerate}

\appendices
\section{Proofs of Theorems}
\label{sec:appendix}

\subsection{Proof of Theorem \ref{th:woARM}}
\label{ap:th1}
In Protocol 1, the utility function (\ref{eq:utility_function}) of each user in the same community is a random variable with the probability distribution induced by the DiSBM model, which has the same expected value. Users in the same community can be viewed as independent copies of a single player. 
Thus the edge formation game can be reduced to a two-player game, where each player represents users in one community.
The utility of the two players, $U_R^{(t)}$ and $U_B^{(t)}$, are respectively the expected value of the utility function (\ref{eq:utility_function}) for red and blue users:
\begin{equation}
\label{eq:expected utility 1}
\begin{split}
    U_R^{(t)} 
    &= p_{R}^{(t)} - p_{B}^{(t)} 
\end{split}
\end{equation}
\begin{equation}
\label{eq:expected utility 2}
\begin{split}
    U_B^{(t)} 
    &= p_{B}^{(t)} - p_{R}^{(t)} 
\end{split}
\end{equation}

It is shown from (\ref{eq:expected utility 1}) that $p_R^{(t)} = 1$ is red player's strictly dominant strategy. 
This is equivalent to say, at red player's turn, it will choose $p_{R}^{(t)}=1$ as its best response strategy regardless of blue player's previous action, and will never switch again. Similar result holds for blue player. 
Therefore the game converges in 2 time steps, i.e., $p_R^{(t)} = p_B^{(t)} = 1$ for all $t = 2,3,\cdots$. 

Thus, the game corresponding to Protocol 1 converges to its unique Nash equilibrium $(p_{R}^{(t)}, p_{B}^{(t)}) = (1, 1)$ after each player acted once, i.e., users only form intra-community edges and the social network is segregated into echo chambers after $t=2$.

\subsection{Proof of Theorem \ref{th:wARM}}
\label{ap:th2}
Similar to the proof of Theorem \ref{th:woARM}, we assume users in one community are independent copies of a single player and
reduce the game with ARM to a two-player game. Based on (\ref{eq:utility_function_2}), the utility of the red player and blue player in the two-player game are respectively:
\begin{equation}
\label{eq:expected utility 3}
\resizebox{.89\hsize}{!}{$
\begin{aligned}
    U_R^{(t)} 
    &= p_{R}^{(t)} - p_{B}^{(t)} + C\big[p_{B}^{(t)}(2 - p_{R}^{(t)} - p_{B}^{(t)}) + p_{R}^{(t)}(1 - p_{R}^{(t)})\big]
\end{aligned}
$}
\end{equation}
\begin{equation}
\label{eq:expected utility 4}
\resizebox{.89\hsize}{!}{$
\begin{aligned}
    U_B^{(t)} 
    &= p_{B}^{(t)} - p_{R}^{(t)} + C\big[p_{R}^{(t)}(2 - p_{R}^{(t)} - p_{B}^{(t)}) + p_{B}^{(t)}(1 - p_{B}^{(t)})\big]
\end{aligned}
$}
\end{equation}
A detailed derivation can be found in Appendix \ref{subsec:appendix2}.

The second order mixed derivatives of (\ref{eq:expected utility 3}, \ref{eq:expected utility 4}) yields 
\begin{equation}
\pdv{U_R^{(t)}}{p_R^{(t)}}{p_B^{(t)}} = \pdv{U_B^{(t)}}{p_R^{(t)}}{p_B^{(t)}} = -C
\end{equation}
Since $C>0$, the game is submodular, which 
guarantees that players' best response strategies converge to a Nash equilibrium \cite{amir2005supermodularity}. This submodular game also has quadratic utility, which makes it similar to the Bertrand game \cite{milgrom1990rationalizability} and thus can be analyzed in a similar approach. 

Note that
\begin{equation}
    \pdv{U_R^{(t)}}{p_R^{(t)}} = 1-C(2p_R^{(t)}+p_B^{(t)}-1)
\end{equation}
\begin{equation}
    \pdv{U_B^{(t)}}{p_B^{(t)}} = 1-C(2p_B^{(t)}+p_R^{(t)}-1)
\end{equation}

We apply iterated strict dominance to attain the Nash equilibrium. Let red player's initial best response set $b^{(0)}_{R}=(0, 1]$.
\begin{enumerate}[\hspace{0in}1)]
    \item If $p_R^{(t)} < \frac{1}{2C}$, then $\pdv{U_R^{(t)}}{p_R^{(t)}} \geq 0$ $\rightarrow$ any $p_R^{(t)} < \frac{1}{2C}$ is strictly dominated.
    \item If $p_R^{(t)} > \frac{1}{2C}+\frac{1}{2}$, then $\pdv{U_R^{(t)}}{p_R^{(t)}} \leq 0$ $\rightarrow$ any $p_R^{(t)} > \frac{1}{2C}+\frac{1}{2}$ is strictly dominated.
\end{enumerate}

The following analysis depends on two different cases of the acceptance probability $C$.
\begin{enumerate}[\hspace{0in}1)]
    \item If $0 < C\leq \frac{1}{2}$, then $\frac{1}{2C}\geq 1$, i.e., $\pdv{U_R^{(t)}}{p_R^{(t)}} \geq 0$ holds for red player's possible actions $p_R^{(t)} \in (0, 1]$. Thus red player's best response is
    \begin{equation}
        b_R(p_R^{(t)}) = 1
    \end{equation}
    Similarly, blue player's best response is
    \begin{equation}
        \textrm{b}_{B}(p_{R}^{(t)}) = 1
    \end{equation}
    Therefore, the Nash equilibrium of the game is 
    \begin{equation}
        (p_{R}^{(t)}, p_{B}^{(t)}) = (1, 1)
    \end{equation}
    i.e., the social network is in segregation.
    
    \item If $\frac{1}{2} < C\leq 1$, then $\frac{1}{2C}+\frac{1}{2}\geq 1$. Thus, after one iteration, red player's remaining undominated strategy set, i.e, the best response set is $b^{1}_{R}=[\frac{1}{2C}, 1]$.
    
    Let the set after $i$ iterations be $b^{i}_{R}=[\underbar{b}^{i}, \bar{b}^{i}]$, where
    \begin{equation}
        \underbar{b}^{i} = \frac{1}{2}(\frac{1}{C}+1-\bar{b}^{i-1})
    \end{equation}
    \begin{equation}
        \bar{b}^{i} = \frac{1}{2}(\frac{1}{C}+1-\underbar{b}^{i-1})
    \end{equation}
    We can show that
    \begin{equation}
        \displaystyle\lim_{i\to \infty} \underbar{b}^{i} = \bar{b}^{i} = \frac{1}{3C} + \frac{1}{3}
    \end{equation}
    
    Therefore, by iteratively applying the best responses and eliminating strictly dominated strategies, the best response strategy converges to the unique Nash equilibrium of the game, which is 
    \begin{equation}
    \label{eq:NE_C}
        (p_R^{(t)}, p_B^{(t)}) = (\frac{1}{3C} + \frac{1}{3}, \frac{1}{3C} + \frac{1}{3})
    \end{equation}
     The Nash equilibrium ($p_{R}^{(t)}<1,\,p_{B}^{(t)}<1$) indicates that ARM incentivizes both red and blue users to form inter-community edges, leading to social integration.
\end{enumerate}

\subsection{Proof of Theorem \ref{th:wARM2}}
\label{ap:th3}
Following the proof of Theorem \ref{th:woARM} and \ref{th:wARM}, we assume users in one community are independent copies of a single player and
reduce the game to a two-player game. We define the two-player stochastic game resulting from Protocol 3 as follows:
\begin{enumerate}[\hspace{0in}1)]
    \item A state of the game, $C^{(t)}$, representing the \emph{acceptance probability} at time $t$. 
    \item Actions of red player and blue player $p_{R}^{(t)}$, $p_{B}^{(t)}\in (0, 1]$, representing their best response strategies at time $t$.
    \item One-stage payoffs for red player and blue player  which are the same as (\ref{eq:expected utility 3}) and (\ref{eq:expected utility 4}), except that the fixed $C$ is substituted by the time-variant $C^{(t)}$
    \begin{equation}
    \resizebox{.8\hsize}{!}{$
    \begin{aligned}
    U_R^{(t)} 
     = & p_{R}^{(t)} - p_{B}^{(t)} \\
        & + C^{(t)}\big[p_{B}^{(t)}(2 - p_{R}^{(t)} - p_{B}^{(t)}) + p_{R}^{(t)}(1 - p_{R}^{(t)})\big]
    \end{aligned}
    $}
    \end{equation}
    \begin{equation}
    \resizebox{.89\hsize}{!}{$
    \begin{aligned}
    U_B^{(t)} 
    = & p_{B}^{(t)} - p_{R}^{(t)}\\
        & + C^{(t)}\big[p_{R}^{(t)}(2 - p_{B}^{(t)} - p_{R}^{(t)}) + p_{B}^{(t)}(1 - p_{B}^{(t)})\big]
    \end{aligned}
    $}
    \end{equation}
    \item A transition rule of $C^{(t)}$ specified in (\ref{eq: evolve C}), which is conditionally independent of players' actions.
    
    \item A discount factor $\gamma \in (0, 1)$.
    \item Value functions of red player and blue player representing their discounted sum of payoff from time $t$ 
    \begin{equation}
        V_{R}^{(t)} = \sum_{\tau=t}^{\infty} \gamma^{\tau-t} U_{R}^{(\tau)}
    \end{equation}
    \begin{equation}
        V_{B}^{(t)} = \sum_{\tau=t}^{\infty} \gamma^{\tau-t} U_{B}^{(\tau)}
    \end{equation}
\end{enumerate}

Based on the above definition of the stochastic game, we aim to derive the player's optimal strategy under the condition that the \emph{acceptance probability} evolves as a semi-Markov process. Recall that in Protocol 2 where the \emph{acceptance probability} is fixed, (in the proof of Theorem \ref{th:wARM}) we have illustrated the convergence of players' best response dynamics to the steady state corresponding to the game's Nash equilibrium. Considering that the state holding time $T_h$ (defined in (\ref{eq: evolve C}))
is large compared with the time that players take action, we claim that in Protocol 3, players' best response dynamics converges to the Nash equilibrium corresponding to $C^{(t)}$ before it transits.

With this claim, what remains to be proved is how players adapt their actions when $C^{(t)}$ transits. Below we illustrate players' strategy at $t$ when $C^{(t)}$ transits to another state $C^{(t+1)}$.

Bellman's dynamic programming recursion yields \cite{krishnamurthy2016partially}:
\begin{equation}
\label{eq: Bellman}
\begin{split}
    V_{R}^{(t)} & = \max_{p_{R}^{(t)}} U_{R}^{(t)}   
    + \gamma \sum_{C^{(t+1)}} P(C^{(t+1)}|C^{(t)}) V_{R}^{(t+1)}  
\end{split}
\end{equation}
In (\ref{eq: Bellman}), $V_{R}^{(t)}$ and $V_{R}^{(t+1)}$ denotes the red user's value function at time $t$ and $t+1$ respectively, assuming that it took the best response $p_R^{(t)}$ at time $t$, which corresponds to the Nash equilibrium of the game with $C^{(t)}$.

The transition dynamics of $C^{(t)}$ are conditionally independent of players' actions. Only the first term in (\ref{eq: Bellman}), i.e., the one-stage payoff $U_{R}^{(t)}$, depends on players' actions. Thus red player's optimal strategy is to myopically maximize the one-stage payoff without concerning about the transition of $C^{(t)}$. Similar result holds for blue player's optimal strategy. Based on the Nash equilibrium (\ref{eq:NE_C}) derived in Sec. \ref{subsec: game analysis 2}, $(\frac{1}{3C^{(t)}}+\frac{1}{3}, \frac{1}{3C^{(t)}}+\frac{1}{3})$ are the optimal strategies for red and blue users. Thus, players' best response dynamics will converge to $(\frac{1}{3C^{(t)}}+\frac{1}{3}, \frac{1}{3C^{(t)}}+\frac{1}{3})$, i.e., they will reach the time-evolving Nash equilibrium in the resulting game.

\section{Derivation of User's Expected Utility with ARM in Equation (\ref{eq:expected utility 3}, \ref{eq:expected utility 4})}
\label{subsec:appendix2}
Take red user as an example. The expected number of inter-community edges formed by ARM for any red user is 
\begin{equation}
\label{eq: appendix 1}
\resizebox{.89\hsize}{!}{$
\begin{aligned}
    & \mathbb{E}\left\{ \displaystyle \sum_{j\in B} \Big\{\scriptstyle C \big[\scriptstyle 1-D^{(t)}(i, j) \big] 
    \frac{\displaystyle \sum_{j'\in B} \scriptstyle \big[D^{(t)}(i, j') + D^{(t)}(j', i)\big] 
    \scriptstyle S^{(t)}(j', j)}{\scriptstyle N-1} \Big\} \right\} \\
    & = C(1-\frac{(1-p_{B}^{(t)})}{N}) N \big(\frac{1-p_{R}^{(t)}}{N} + \frac{1-p_{B}^{(t)}}{N} \big) p_{B}^{(t)} \\ 
    & = CN\big(\frac{1-p_{R}^{(t)}}{N} + \frac{1-p_{B}^{(t)}}{N} \big) p_{B}^{(t)}  \\
    & = C(2-p_{R}^{(t)}-p_{B}^{(t)})p_{B}^{(t)}
    \end{aligned}
$}
\end{equation}
where $\mathbb{E}$ denotes expectation over the probability distribution induced by the DiSBM.

On the second line of (\ref{eq: appendix 1}), $C$ is the \emph{acceptance probability}, $N$ is the number of blue users, $(1-\frac{1-p_{B}^{(t)}}{N})$ is the probability that the recommendation target has not followed $i$ yet, $\big(\frac{1-p_{R}^{(t)}}{N} + \frac{1-p_{B}^{(t)}}{N} \big) p_{B}^{(t)}$ is the probability that $i$ is recommended by ARM to any blue user. The third line satisfies when $N$ is large, i.e., $\frac{1-p_{B}^{(t)}}{N} \approx 0$. 

Similarly, $i$'s expected reward for connecting users from two communities (e.g. a red user connects its red followers to its blue friends) during ARM recommendations is 
\begin{equation}
\label{eq: appendix 2}
    \begin{split}
    & \mathbb{E}\left\{\displaystyle \sum_{j\in B} \displaystyle \sum_{i'\in R} \Big\{\scriptstyle C \big[\scriptstyle 1-D^{(t)}(j, i') \big] \scriptstyle \big[\frac{S^{(t)}(i, i') D^{(t)}(j, i)}{\scriptstyle N-1} \big] \Big\}\right\}  \\
    & = C(1-\frac{1-p_{R}^{(t)}}{N})  N\frac{p_R^{(t)}\frac{1-p_{R}^{(t)}}{N}(N-1)}{N-1} \\
    & = C(1-p_{R}^{(t)})p_{R}^{(t)}
    \end{split}
\end{equation}

Combining (\ref{eq: appendix 1}, \ref{eq: appendix 2}) for the additional reward, and (\ref{eq:expected utility 1}, \ref{eq:expected utility 2}) for the expected utility of users in the game without ARM, we can derive the expected utility of users, i.e., the utility of red and blue player, in (\ref{eq:expected utility 3}, \ref{eq:expected utility 4}). 

\ifCLASSOPTIONcompsoc
  \section*{Acknowledgments}
\else
  \section*{Acknowledgment}
\fi

This research was supported in part by the  U. S. Army Research Office under grants W911NF-21-1-0093 and W911NF-19-1-0365.

\ifCLASSOPTIONcaptionsoff
  \newpage
\fi



%
\bstctlcite{IEEEexample:BSTcontrol}
\bibliographystyle{IEEEtran}
\end{document}